\documentclass{amsproc}
\usepackage{amsfonts}
\usepackage{amssymb}
\usepackage{amsmath}
\usepackage{lineno}
\newtheorem{theorem}{Theorem}[section]

\newcommand{\ff}{\mathbb{F}}

\newcommand{\de}{\delta}
\newcommand{\si}{\sigma}
\newtheorem{definition}[theorem]{Definition}
\newtheorem{example}[theorem]{Example}

\newtheorem{remark}[theorem]{Remark}
\newtheorem{proposition}[theorem]{Proposition}

\numberwithin{equation}{section}

\newcommand{\mulo}{K[t_1;\si_1,\de_1][t_2;\si_2,\de_2]\cdots[t_n;\si_n,\de_n]}

\begin{document}

\title{ Evaluation of iterated Ore polynomials and skew Reed-Muller codes}

\author{Andr\'e Leroy}
\address{ Université Artois, UR 2462, Laboratoire de Mathématiques de Lens (LML), F-62300 Lens, France}
\email{ andre.leroy@univ-artois.fr}

\author{ Nabil Bennenni}
\address{USTHB, Facult\'{e} de Math\'{e}matiques\\
	BP 32 El Alia Bab Ezzouar Algiers Algeria}
\email{nabil.bennenni@gmail.com}



\keywords{ Evaluation of iterated Ore polynomials,  skew Reed-Muller codes}

\begin{abstract}
 In this paper, we study two ways of evaluating iterated Ore polynomials.  We provide many examples and compare these evaluations.  We use the evaluation maps to construct Reed-Muller codes and compute explicitly some of the data that are associated to such codes.
\end{abstract}

\maketitle

\section*{ Introduction}
The evaluation of polynomials is a keystone in many branches of mathematics.   In noncommutative algebra one of the most attractive notion of polynomials is the Ore polynomials (a.k.a. skew polynomials).  There are plenty of works dealing with the evaluations of these polynomials with coefficients in division rings or prime rings (\cite{L2},\cite{LLO}).  In coding theory the evaluation of polynomials is used, in particular, for constructing Reed-Muller codes (\cite{bib4}, \cite{bib1}, \cite{bib5}).

Let us recall the general definition of the skew polynomial ring.  
Let $K$ be a ring, $\si \in End(K)$, and $\de$ a skew $\si$-derivation on $K$ (i.e. $\de$ is an additive map from $K$ to $K$ such that for any $a,b\in K$, $\de(ab)=\si(a)\de(b)+\de(a)b$).   The elements of a skew polynomial ring $R=K[t;\si,\de]$ are polynomials $\sum_{i=0}^na_it^i$ with coefficients $a_i\in K$ written on the left.  
The addition of these polynomials is the usual one and the multiplication is based on the commutation rule $ta=\si(a)t + \de(a)$, for any $a\in K$.   With these two operations, $R$ becomes a ring.   When $K$ is a division ring, $R$ is a left principal ideal domain. If $\si$ is an automorphism of the division ring $K$ then $R$ is also a right principal ideal domain.  This construction can be iterated: we consider a ring $K$, a sequence of skew polynomial rings
$R_1=K[t_1;\si_1,\de_1]$, where $\si_1\in End(K)$ and $\de_1$ is a $\si_1$-derivation of $K$.  We then build
$R_2=R_1[t_1;\si_2,\de_2]=K[t_1;\si_1,\de_1][t_2;\si_2,\de_2]$,  where $\si_2\in End(R_1)$ and $\de_2$ is a $\si_2$-derivation of $R_1$.
Continuing this construction we arrived at
the iterated polynomial ring in $n$ noncommutative variables
$R=R_n=K[t_1;\si_1,\de_1][t_2;\si_2,\de_2]\cdots[t_n;\si_n,\de_n]$.  
The evaluation of skew polynomials has a very different behavior than the usual one for commutative variables.   Most of the literature is related to right evaluation of univariate (i.e. $n=1$) polynomials with coefficients in a division ring (e.g. \cite{ LLO}, \cite{L2}).  For the case when the base ring $K$ is not necessarily a division ring, the pseudo-linear transformations are very useful (\cite{L2}).   When $K$ is a division ring the left evaluation was also considered even when $\si$ is not onto \cite{lam}.  
While considering the evaluation of a polynomial $f(t_1,\dots,t_n)\in R=\mulo$ at a point $(a_1,\dots,a_n)\in K^n$, we are quickly facing a \lq \lq noncommutative phenomenon":  the ideal 
$I=R(t_1-a_1)+\cdots +R(t_2-a_2)$ can be equal to $R$.  We will show how it is possible to avoid this problem. \\
Reed-Muller codes over $\mathbb{F}_2$ were introduced by Reed in 1954 \cite{bib4}.   Kasami, Lin and Peterson in  \cite{bib2} extended the definition to any finite field $\mathbb{F}_q$.  Reed-Muller codes are very useful in cryptography. Indeed, they show good resistance to linear cryptanalysis.   The interested reader may consult  \cite{bib5}, for more details. Reed-Muller codes are amongst the simplest examples of the class of geometrical codes,  which also includes Euclidean  geometry and projective geometry codes.  

The paper is organized as follows. Section 2 presents two ways of evaluating iterated polynomials in a general context.  We provide many examples and compare these evaluations.  
In Section 3, we apply the evaluation maps to construct 
Reed-Muller codes and compute some of the data that are associated to such codes.

%
%
\section{Polynomial maps in $n$ variables.} 

In this section, we define a notion
of evaluation for polynomials in an iterated Ore extension $R=\mulo$ over an arbitrary ring $K$ (with unity).   Throughout the paper we require that $\si_i(K)\subset K$ and $\de_i(K)\subset K$ for any $1\le i \le n$.
We will need some facts about iterated polynomial rings.
\begin{remark}
	\label{First remarks} 
	\label{inner auto and inner derivations}
	{\rm 
		\begin{enumerate}
			\item When $\de=0$ we denote the Ore extension as $R=K[t;\si$].  Similarly when $\si=id$, we write $R=K[t;\de]$.
			\item We remark that if $u\in K$  an invertible element and, for any $x\in K$, we have $\si(x)=I_u(x)=uxu^{-1}$ then $K[t;\si;\de]
			=K[u^{-1}t;u^{-1}\delta]$.  Notice that if $\delta=0$, this gives a usual polynomial ring.
			\item If $\si$ is an endomorphism of the ring $K$ and $x,a$ are elements in $K$, we define $\de_a(x)=ax-\si(x)a$.   This is a $\si$-derivation on $K$ called the inner derivation on $K$.
			Let us notice that in this case we have, for any $x\in K$ that
			$(t-a)x=\si(x)(t-a)$ so that $R=K[t;\si,\de]=K[t-a;\si]$.
			\item Combining the two previous remarks we have that, for any $u,v\in K$ with $u$ an invertible element,  $K[t;I_u,\de_v]=K[u^{-1}(t-v)]$ is a usual polynomial ring.
		\end{enumerate}
	}
\end{remark}

Let us briefly recall the definition of the evaluation of a polynomial $f(t)\in R=K[t;\si,\de]$ at an element $a \in K$.
We define $f(a)\in K$ to be the only element of $K$ such that $f(t)-f(a)\in R(t-a)$.   This means, in particular, that $a\in K$ is a right root of $f(t)$ when $t-a$ is right factor of $f(t)$.   We then introduce, for any $i\ge 0$, a  map $N_i$ defined by induction as follows:
$$
N_0(a)=1,\quad N_{i+1}(a)=\si(N_i(a))a + \de(N_i(a)). 
$$ 
This leads to a concrete formula for the evaluation of any polynomial $f(t)=\sum_{i=0}^nb_it^i\in R=K[t;\si,\de]$ at an element $a\in K$:
$$
f(a)=\sum_{i=0}^{n}b_iN_i(a).
$$

Before defining the evaluation of iterated polynomials, we now provide a few classical examples.
\begin{example}
	{\rm 
		\begin{enumerate}
		\item[(1)] Consider the complex number $\mathbb{C}$ and the conjugation denoted by \lq\lq $-$".  We then construct the Ore extension: $R=\mathbb{C} [t;-]$; the commutation rule is here $t(a+ib)=(a-ib)t$, where $a,b\in \mathbb R$.  An element $a\in \mathbb C$ is a (right) root of $t^2+1$
		if $N_2(a) +1=0$ i.e. $\overline{a}a+1=0$.  From this it is clear that $t^2+1$ is an irreducible polynomial in $R$.   On the other hand, it is easy to check that $ t^2+1$ is a central polynomial and we get that the quotient $R/(t^2+1)$ is isomorphic to the quaternion algebra $\mathbb H$.  On the other hand, the roots of the polynomial $t^2-1$ are exactly the complex numbers of norm $1$.
		\item[(2)] Another important kind of Ore extension is obtained by presenting the Weyl algebra $A_1=k[X][Y;id.,\frac{d}{dx}]$, where $k$ is a field.  The commutation rule that comes up is thus $YX=XY +1$.  In characteristic zero this algebra is simple (it doesn't have any two sided ideal except $0$ and $A_1$).
		If $char(k)=p>0$, $A_1$ is a $p^2$ dimensional algebra
		over its center $k[X^p,Y^p]$.  Evaluating powers of $Y$ at $X$ we get for instance $Y^2(X)=X^2+1$, $Y^3(X)=X^3+3X$, and $Y^4(X)=X^4+6X^2+3$.
		
		\noindent We can iterate the procedure and get the Weyl algebra $A_n(k)$.
		
		\item[(3)] In coding theory, Ore extensions of the form  $\mathbb F_q[t;\theta]$, where $q=p^n$ and $\theta$ is the Frobenius automorphism defined by  $\theta(a)=a^p$, for $a\in \mathbb F_q$, have been extensively used (cf. \cite{BU}).  In \cite{L2}, both the evaluation and the factorization for these extensions were described in term of classical (untwisted) evaluation of polynomials.  General Ore extensions have also been used for constructing codes (\cite{BL}).
	\end{enumerate}
	}
\end{example}

Let $f(t_1,t_2,\dots,t_n)\in R=\mulo$ and an $n$-tuple $(a_1,a_2,\dots , a_n)\in K^n $ be given ($K$ is a ring which is assumed to be stable by the maps $\si_i$ and $\de_i$, with$1\le i \le n$).  We want to define the value of $f(a_1,a_2,\cdots,a_n)$.
It will be convenient to have notations for intermediate Ore extensions.  So we put  $R_0=K$, and, for $i\in \{1,\dots,n-1\}$,
$R_{i}=R_{i-1}[t_{i};\sigma_i,\delta_i]$.   Notice that $R_n=R$.  

Let $(a_1,a_2,\cdots,a_n)\in K^n$ and consider
$f_0=f(t_1,\dots,t_n)\in R=R_{n-1}[t_n;\sigma_n,\delta_n]$.  We can define $f_1=f(t_1,\dots,t_{n-1},a_n)\in R_{n-1}$ as the remainder of the division of $f_0$ on the right by $t_n-a_n$.  This is 
$$
f_0= q(t_1,\dots,t_{n-1})(t_n-a_n)+f_1.
$$
By the same procedure we divide $f_1=f(t_1,\dots,t_{n-1},a_n)$ by $t_{n-1}-a_{n-1}$ and get the remainder $f_2=f(t_1,\dots, t_{n-2},a_{n-1},a_n)\in R_{n-2}$.   This is
$$
f_1=f(t_1,\dots,t_{n-1},a_n)=q_2(t_1,t_2,\dots, t_{n-1})(t_{n-1}-a_{n-1})+f_2.
$$
Continuing this process, we define $f_{3}\in R_{n-3},\dots f_{n-1}\in R_1, f_n\in R_0=K$.  The evaluation of $f(t)$ at $(a_1,\dots,a_n)$ is $f_n\in K$.




This leads to the following definition.
\begin{definition}
	For $f(t_1,\dots, t_n) \in R= \mulo$ and $(a_1,\dots,a_n)\in K^n$, we define $f(a_1,\dots,a_n)$, 
	the evaluation of 
	$f(t_1,t_2,\dots, t_n)$ at $(a_1,\dots,a_n)$, as the representative in $K$ of $f(t_1,t_2,\dots, t_n)$ modulo  
	$I_n(a_1,\dots,a_n)=R_1(t_1-a_1)+ \dots +R_{n-1}(t_{n-1}-a_{n-1})+R(t_n-a_n)$, where, for $1\le i \le n$, $R_i$ stands for $R_i=K[t_1;\si_1,\de_1]\cdots [t_i;\si_i,\de_i]$. 
\end{definition}

Remark that $I_n=I_n(a_1,\dots,a_n)$ is an additive subgroup of $R,+$ and is not, in general, an ideal in $R=R_n$.  

Another way to compute the evaluation is to remark that the polynomials in $R=\mulo$ can be written in a unique way as sums of monomials of the form $\alpha_{l_1,\dots ,l_n}t_1^{l_1}t_2^{l_2}\cdots t_n^{l_n}$, for some $0\le l_1,l_2,\dots,l_n\le n$ and $\alpha_{l_1,\dots ,l_n}\in K$. The sum $\sum_{i=1}^nl_i$ is called the degree of the monomial and the degree of a polynomial is given by the degree of the monomials with higher degree.  
First remark that if a monomial $m=m(t_1,\dots , t_n)$ is of degree $l$, then for any $a\in K^{*}$, $ma$ is a polynomial of degree $l$ as well.  We define the evaluation by induction on the degree.
In practice, it is sufficient to define the evaluation of a monomial.

For $(a_1,a_2,\dots,a_n)\in K^n$ and $m=m(t_1,\dots , t_n)$ we define $m(a_1,\dots , a_n)$ as follows:

If $deg(m)=1$ and $m=\alpha t_i$ for some $1\le i \le n$ and $\alpha \in K$ then  $m(a_1,\dots , a_n)=a_i$.   
So, assume that the evaluation of the monomials of degree $<l$ has been defined and consider a monomial $m=m(t)$ such that
$deg(m)=l$ and $m=m'(t_1,\dots,t_j)t_j$ for some $1\le j \le n$,
then $m(a_1,\dots,a_n)=m''(a_1,\dots,a_j)$ where the polynomial 
$m''(t_1,\dots,t_j)=m'(t_1,\dots,t_j)a_j$ is of degree smaller than $l$.

Let us make some remarks about this evaluation.

\begin{remark}
	{\rm 
		\begin{enumerate}
			\item First let us remark that, before we evaluate a polynomial, we must write it with the variables appearing in the precise order $t_1,t_2,\dots,t_n$ (from left to right).  In other words, before evaluating a polynomial we must write it as a sum of monomials of the form $t_1^{l_1}t_2^{l_2}\cdots t_n^{l_n}$.  
			\item Of course, this is not the only possible definition, but although
			it might look a bit strange, still it is natural if we want the
			zeros being right roots.  Let us look more closely to the case of two variables.  In other words it is natural for $(a_1,a_2)$
			to annihilate a polynomial of the form $g(t_1)(t_2-a_2)$;  but we
			don't necessarily expect $(a_1,a_2)$ to be a zero of a polynomial of
			the form $(t_1-a_1)h(t_1)$.
			
			\item Since the base ring is not assumed to be commutative we must be very cautious while evaluating a polynomial, even when the variables commute.   
			With this definition, $t_1t_2\in K[t_1,t_2]$ evaluated
			at $(a,b)$ gives $(t_1t_2)(a,b)= ba$.  This might look very
			strange but if we think of evaluation in terms of \lq\lq operators" via
			the right multiplication by $b$ followed by the right
			multiplication by $a$ this evaluation looks perfectly fine and
			the apparent awkwardness disappears.  
			\item We assumme that the different endomorphisms $\si_i$s are such that $\si_i(K)\subseteq K$.  In other words, for $i>1$,
			$\si_i$ is an extension of $\si_i \vert _K$ to $R_{i-1}$.  Some commutation relations exist between the different endomorphisms $\si_i$.
			For instance, consider the polynomial ring extension
			$R=K[t_1;\si_1][t_2;\si_2]$.  If we put $\si_2(t_1)=\sum_{i=0}^{l}a_it^i$ computing $\si_2(\si_1(a)t_1)=\si_2(t_1a)=\si_2(t_1)\si_2(a)$, leads to the following equations
			$$
			\forall \;\; 0\le i \le l, \quad a_i\si_1^i(\si_2(a))=\si_2(\si_1(a))a_i.
			$$ 
			%
			%
			%
		\end{enumerate}	
	}
\end{remark}
\vspace{3mm}

\noindent Let us now give some examples.

\begin{example}
	
	{\rm
		
		\item[(1)] Let $A_1(k)=k[X][Y;id.,\frac{d}{dX}]$ and $(a,b)\in K^2$, then
		\begin{itemize}
			\item $YX=XY+1=X(Y-b)+bX+1=X(Y-b)+b(X-a)+ba+1$ and hence $(YX)(a,b)=ba + 1$.
			\item $YX^2=X^2Y+2X=X^2(Y-b)+bX^2+2X=X^2(Y-b)+bX(X-a)+baX+2(X-a)+2a$ and hence $(YX^2)(a,b)=ba^2+2a$.
			\item $Y^2X=XY^2+2Y=XY(Y-b)+bX(Y-b)+bXb+2(Y-b)+2b$ and hence $(Y^2X)(a,b)=b^2a + 2b$.
		\end{itemize}
		\item[(2)] Consider the double Ore extension $R=\mathbb F_q[t_1;\theta][t_2;\overline{\theta}]$, where $q=p^n$, $\overline{\theta}(a)=a^p$, and $\overline{\theta}(t_1)=t_1$.  A polynomial $p(t_1,t_2)\in R$ can be written as $p(t_1,t_2)=\sum_{i=0}^{n}p_i(t_1)t_2^i=\sum_{i,j}a_{i,j}t_1^jt_2^i$ and we can easily check that $$p(t_1,t_2)(a,b)=\sum_{i,j}\theta^j(N_i(b))N_j(a)=b^{\frac{(p^i-1)p^j}{p-1}}a^{\frac{p^i-1}{p-1}}.$$
	}	 
\end{example}

Let us now turn to another possible way of evaluating a polynomial $f(t_1,\dots, t_n)\in R=\mulo$ at $(a_1,a_2, \dots, a_n)\in K^n$.   We consider the element of $K$ 
representing $f$ in the quotient $R/I$ where 
$I=R(t-a_1)+R(t-a_2)+\dots + R(t-a_n)$.   The set $I$ is the usual left ideal of $R$ and this evaluation looks more 
classical.   We now compare the two evaluations by 
comparing $I_n=R_1(t_1-a_1)+ \dots +R_{n-1}(t_{n-1}-a_{n-1})+R(t_n-a_n)$ and $I$. 
%
%
%
%
%
\begin{theorem}
	Let $K$ be a ring and $R=\mulo$.  We consider $(a_1,a_2,\dots,a_n)\in K^n$ and put $I=R(t-a_1)+R(t-a_2)+\dots + R(t-a_n)$ and $I_n=R_1(t_1-a_1)+ \dots +R_{n-1}(t_{n-1}-a_{n-1})+R(t_n-a_n)$, where, for $1\le i \le n$, $R_i$, $R_i=K[t_1;\si_1,\de_1]\cdots [t_i;\si_i,\de_i]$.   With these notations, the following are equivalent:
	\begin{enumerate}
		\item $I_n=I$;
		\item $R(t_i-a_i)\subseteq I_n$;
		\item $I\ne R$;
		\item For $1\le i <j \le n$, we have $t_j(t_i-a_i)\in I_n$;
		\item For $1\le i <j \le n$, we have $\si_j(t_i-a_i)a_j + \de_j(t_i-a_i)\in I_n$;
		\item For $1\le i <j \le n$, we have $(t_jt_i)(a_1,\dots,a_n)=\si_j(a_i)a_j+\de_j(a_i)$.
		
	\end{enumerate}
\end{theorem}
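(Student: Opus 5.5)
The plan is to prove the cycle $(1)\Rightarrow(3)$, $(2)\Rightarrow(1)$, $(2)\Rightarrow(4)\Leftrightarrow(5)\Leftrightarrow(6)$, $(4)\Rightarrow(2)$, and finally $(3)\Rightarrow(6)$. I read $I$ as $R(t_1-a_1)+\dots+R(t_n-a_n)$. I read $(2)$ and $(4)$–$(6)$ as holding for every index.

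\emph{Basic fact.} As additive groups $R=K\oplus I_n$. Existence of the decomposition comes from the successive right divisions by $t_n-a_n,\dots,t_1-a_1$ described before the Definition. For uniqueness, I would show $I_n\cap K=0$ by descending induction on the top variable $t_\ell$ and its degree. If $\sum_k g_k(t_k-a_k)\in K$ with $g_k\in R_k$, then the part with $k=\ell$ has a leading $t_\ell$-term that cannot be cancelled by the other summands, which lie in $R_{\ell-1}$; so $g_\ell=0$. The same argument gives $I_n\cap R_\ell=I_\ell:=R_1(t_1-a_1)+\dots+R_\ell(t_\ell-a_\ell)$. Consequently an element $h\in R$ lies in $I_n$ iff its evaluation $h(a_1,\dots,a_n)$ is $0$, and evaluation is left $K$-linear. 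Clearly $I_n\subseteq I$.

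\emph{Easy implications.} For $(1)\Rightarrow(3)$: $1\notin I_n$ since $I_n\cap K=0$. For $(2)\Rightarrow(1)$: $I=\sum R(t_i-a_i)\subseteq I_n\subseteq I$. For $(2)\Rightarrow(4)$: $t_j(t_i-a_i)\in R(t_i-a_i)$. For $(4)\Leftrightarrow(5)$, I write
\[
t_j(t_i-a_i)=\si_j(t_i-a_i)(t_j-a_j)+\bigl(\si_j(t_i-a_i)a_j+\de_j(t_i-a_i)\bigr),
\]
where the first term lies in $R_j(t_j-a_j)\subseteq I_n$. For $(5)\Leftrightarrow(6)$: the element in $(5)$ lies in $I_n$ iff its evaluation vanishes. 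Its evaluation equals $(t_jt_i)(a)-\si_j(a_i)a_j-\de_j(a_i)$. This uses $t_jt_i=\si_j(t_i)t_j+\de_j(t_i)$ together with the recursive definition of evaluating monomials, and $\si_j(K),\de_j(K)\subseteq K$.

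\emph{$(4)\Rightarrow(2)$, the main step.} I prove by induction on $\ell$ that $I_\ell$ is a left ideal of $R_\ell$; with $\ell=n$ this gives $(2)$. By the basic fact, $(4)$ gives $t_\ell(t_k-a_k)\in I_n\cap R_\ell=I_\ell$ for $k<\ell$.
\begin{itemize}
\item The summand $R_\ell(t_\ell-a_\ell)$ is already a left ideal. By induction $I_{\ell-1}=\sum_{k<\ell}R_{\ell-1}(t_k-a_k)$, so $R_{\ell-1}I_\ell\subseteq I_\ell$.
\item It remains to show $t_\ell I_{\ell-1}\subseteq I_\ell$. For $u\in R_{\ell-1}$ and $k<\ell$,
\[
t_\ell u(t_k-a_k)=\si_\ell(u)\,t_\ell(t_k-a_k)+\de_\ell(u)(t_k-a_k).
\]
The second term lies in $I_{\ell-1}$. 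The first is $\si_\ell(u)w$ with $w\in I_\ell$ and $\si_\ell(u)\in R_{\ell-1}$, so it lies in $I_\ell$ by the first bullet.
\end{itemize}
Since $R_\ell$ is generated by $R_{\ell-1}$ and $t_\ell$, this closes the induction.

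\emph{$(3)\Rightarrow(6)$, the expected obstacle.} Let $c=(t_jt_i)(a)-\si_j(a_i)a_j-\de_j(a_i)\in K$ be the representative of the element in $(5)$. Since $I_n\subseteq I$ and $t_j(t_i-a_i)\in I$, we get $c\in I$. If $c\neq 0$ is invertible, then $I=R$. So when $K$ is a division ring, $I\neq R$ forces $c=0$, which is $(6)$. For a general ring $K$, a nonzero non-unit $c$ does not obviously make $I=R$. I would first try to show $Kc\cap K$ generates all of $K$ using the left ideal structure of $I$. I expect that the statement genuinely needs $K$ to be a division ring (or $c$ a unit) at this point.
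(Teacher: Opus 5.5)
Your argument is correct for everything it actually claims, and its computations are essentially the paper's. The same identity $t_j(t_i-a_i)=\sigma_j(t_i-a_i)(t_j-a_j)+\sigma_j(t_i-a_i)a_j+\delta_j(t_i-a_i)$ drives (4)$\Leftrightarrow$(5). Your proof that $I_n\cap K=0$ plays the role of the paper's 2)$\Rightarrow$3), which compares constant terms after substituting $y_i=t_i-a_i$. Your step out of (3) is the paper's 3)$\Rightarrow$4) argument.

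Where you differ, you are more careful. The paper's 6)$\Rightarrow$1) passes directly from $t_j(t_i-a_i)\in I_n$ to $R(t_i-a_i)\subseteq I_n$. That passage needs $I_n$ to be stable under left multiplication by each $t_j$ on arbitrary elements $u(t_k-a_k)$. Your induction showing that $I_\ell$ is a left ideal of $R_\ell$ supplies exactly this. It uses $t_\ell u(t_k-a_k)=\sigma_\ell(u)\,t_\ell(t_k-a_k)+\delta_\ell(u)(t_k-a_k)$ together with $I_n\cap R_\ell=I_\ell$. Likewise, your decomposition $R=K\oplus I_n$ makes explicit the uniqueness of representatives that the paper uses silently in 3)$\Rightarrow$4) and 5)$\Rightarrow$6).

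Your doubt about (3)$\Rightarrow$(6) is justified, and the paper's proof has precisely this gap. The step ``$c\in I$, hence by 3) $c=0$'' only works if every nonzero element of $K$ is a unit. For an arbitrary ring the implication is false.

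\emph{Counterexample.} Take $K=\mathbb{Z}$ and $R=\mathbb{Z}[t_1][t_2;\mathrm{id},\delta_2]$ with $\delta_2=2\,\frac{d}{dt_1}$, so that $\delta_2(\mathbb{Z})=0$ and $t_2t_1=t_1t_2+2$. Use the point $(0,0)$.
\begin{enumerate}
\item $I\neq R$. Since $2$ is central and $R/2R\cong\mathbb{F}_2[t_1,t_2]$, sending $\sum c_{kl}t_1^kt_2^l$ to $c_{00}\bmod 2$ is a ring homomorphism $R\to\mathbb{F}_2$. It kills $t_1$ and $t_2$, hence kills $I$, but not $1$.
\item (6) fails. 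We have $(t_2t_1)(0,0)=2\neq 0=\sigma_2(0)\cdot 0+\delta_2(0)$.
\item (1) fails. Since $t_2t_1\in Rt_1\subseteq I$ but its evaluation is nonzero, $t_2t_1\notin I_2$.
\end{enumerate}

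So do not try to make $Kc$ generate $K$; no such argument can succeed. What holds for general $K$ is exactly what you proved: (1)$\Leftrightarrow$(2)$\Leftrightarrow$(4)$\Leftrightarrow$(5)$\Leftrightarrow$(6)$\Rightarrow$(3). The full equivalence holds once $K$ is a division ring, which is the setting of the paper's later sections, and there your unit argument closes the cycle.
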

\begin{proof}
	\noindent $1) \Leftrightarrow 2)$ is clear.
	
	\noindent $2) \Rightarrow 3)$ If $I=R$, then $I_n=R$ and, for $1\le i \le n$, there
	exist polynomials $g_i(t_1,t_2\cdots,t_i)\in R_i$ such that
	$1=\sum_{i=1}^ng_i(t_1,\dots ,t_i)(t_i-a_i)$.  The change of variables defined by putting $y_i=t_i-a_i$ gives that, for some $h_i\in R_i$, we have $1=\sum_{i=1}^nh_i(y_1,\dots ,y_i)y_i$.  A comparison of the coefficients of degree zero of this equality leads to a contradiction. 
	
	\noindent $3) \Rightarrow 4)$ There exists $c\in K$ such that
	$t_j(t_i-a_i)-c\in I_n\subseteq I$.  Since $t_j(t_i-a_i)\in I$ we
	obtain $c\in I$ hence, by $3)$, we must have $c=0$.  This shows
	that $t_j(t_i-a_i)\in I_n$.
	
	\noindent $4) \Rightarrow 5)$ The fact that $t_j(t_i-a_i)\in I_n$,
	implies that $\si_j(t_i-a_i)t_j+\de_j(t_i-a_i)\in I_n$.  Since
	$(t_j-a_j)\in I_n$, we obtain that
	$\si_j(t_i-a_i)a_j+\de_j(t_i-a_i)\in I_n$.
	
	\noindent $5) \Rightarrow 6)$ This is clear since $5)$ implies
	that
	$t_jt_i-\si_j(a_i)a_j-\de_j(a_i)=\si_j(t_i)t_j+\de_j(t_i)-\si_j(a_i)a_j-\de_j(a_i)\in
	I_n$.
	
	\noindent $6) \Rightarrow 1)$ The equality in $6)$ implies that
	$t_jt_i-\si_j(a_i)a_j-\de_j(a_i)\in I_n$ for $1\le i < j\le n$. This gives that $t_j(t_i-a_i)\in I_n$.  Since also we have that $R_i(t_i-a_i)\subseteq I_n$ we conclude that $t_j(t_i-a_i)\in I_n$
	for any integer $1\le i \le j \le n$ and hence $R(t_i-a_i)\subseteq I_n$ for any $1\le i \le n$.  This yields $I_n=I$, as required.
\end{proof}

\begin{remark}
	{\rm 
		1) If $n=1$, we obviously have $I=I_1$ and all points are good.
		
		2) In general, the two additive subset $I_n\subset I$ are different.  As mentioned above, we will use $I_n$ for our evaluation. The reason is that while evaluating with respect to $I$ we often face the following problem: the left $R$-module $I$ can be the entire ring.  So that the evaluation of any polynomial at $(a_1,a_2\dots,a_n)\in K^n$ with respect to $I$ is zero.  This the case in the Weyl algebra
		$R=A_1(K)=K[t_1][t_2;id,\frac{d}{dt_1}]$ for the piont $(0,0)$ since we then have $t_2t_1-t_1t_2=1$ and hence $Rt_1+Rt_2=R$.  This is not the case with our evaluation since, for instance, $t_2t_1=t_1t_2 +1 $,  so that $t_2t_1 +I_2(0,0) = 1 + I_2(0,0)$ and hence the evaluation of $t_2t_1$ at $(0,0)$ is just $1$.
		
		3) In fact,  it is quite often the case that $I=R$, even if we are using Ore polynomials with zero derivations. Consider for instance the Ore extension $R=K[t_1;\si_1][t_2;\si_2]$ where $K$ is a field and $\si_2$ is an endomorphism of $K[t_1;\si_1]$ such that $\si_2(t_1)=t_1$.  It is easy to check that for any $(a_1,a_2)\in K^2$ we have 
		$(t_2-\si_1(a_2))(t_1-a_1)+(-t_1+\si_2(a_1))(t_2-a_2)=\si_1(a_2)a_1-\si_2(a_1)a_2$.  So that if $\si_1(a_2)a_1-\si_2(a_1)a_2\ne 0$, then the left ideal $I(a_1,a_2)=R$.  This shows that very often the evaluation modulo $I$ turns out to be trivial.  Once again, this is not the case with our evaluation, since we have that 
		$t_2(t_1-a_1)$ is represented by  $\si_1(a_2)a_1-\si_2(a_1)a_2$ modulo $I_2(a_1,a_2)$.
	} 
\end{remark}

\begin{definition}
	A point $(a_1,\dots,a_n)\in K^n$ will be called a good point if the two ways of evaluating a polynomial $in \mulo$ at $(a_1,\dots,a_2)$ coincide i.e. if $I_n(a_1,\dots,a_n)= I$.
\end{definition}

The advantage of the good points is that in this case  the evaluations via a left ideal $I$ and $I_n$ coincide.
But of course, we can still evaluate a polynomial at any points via our additive subset $I_n=R_1(t_1-a_1)+ \dots +R_{n-1}(t_{n-1}-a_{n-1})+R(t_n-a_n)$.

\begin{example}
	\begin{enumerate}
		\item In the classical case ($\si_i=id_K$ and
		$\de_i=0$, for every $1\le i \le n$), every point $(a_1,\dots,a_n)\in K^n$ is good.
		\item If $K$ is a division ring $\si_1=id_K,\; \de_1=0$ and $\si_2=id,
		\de_2=d/dt_1$, we have for any $a,b\in K,\;
		(t_2-b)(t_1-a)=(t_1-a)(t_2-b)+1$.  This shows that in this case
		there are no good points.
	\end{enumerate}
\end{example}

Working with $I_n$ instead of $I$ we avoid the problem of having points that are zeroes of every polynomial in the ring $R=\mulo$.   

We will consider Reed-Muller codes and hence we will need to evaluate polynomials in several variables.  We consider the case of iterated Ore polynomials defined on a finite base ring (field) $K$.  In classical Reed-Muller coding the polynomials that are used to make evaluations are restricted to be of some bounded degree in each variables.   Indeed polynomial maps associated to $X^q-X\in \ff_q[X]$ are identically zero.  So our first task is to consider what is the analogue of this polynomial in an Ore polynomial ring.   We will need a few definitions.

\begin{definition}
	Let $a\in K$ be an element of a division ring $K$, $\si$ and $\de$ an endomorphism of $K$ and a 
	$\si$-derivation of $K$, respectively.
	\begin{enumerate}
		\item For a nonzero $x\in K$, we denote  $a^x=\si(x)ax^{-1} + \de(x)x^{-1}$ and put $\Delta(a)=\{a^x \mid x\in K^*\}$.  This set is called the ($\sigma,\delta$) conjugacy class of $a$.
		\item The map $T_a:K \longrightarrow K$ defined by  $T_a(x)=\si(x)a +\de(x)$ is called the ($\si,\de$) pseudo-linear map associated to $a$.
		\item The ($\si,\de$) centralizer of $a$ is the set $C^{\si,\de}(a)=\{x\in K^* \mid a^x=a\}\cup \{0\}$.
	\end{enumerate} 
\end{definition} 
Let us mention that the notion of $\si,\de$ conjugation appears naturally due to the fact that while evaluating a product $fg$ we have the nice formula $fg(a)=0$ if $g(a)=0$ and $(fg)(a)=f(a^{g(a)})g(a)$ if $g(a)\ne 0$.  Let us also mention that the evaluation at an element $a\in K$ is strongly related to the ($\si,\de$)-pseudo linear map $T_a$ via the equality $f(a)=f(T_a)(1)$ or more generally that, for any $h\in R=K[t;\si,\de]$, $a\in K$ and $x\in K^*$,  $h(T_a)(x)=h(a^x)x$. 
We first need the following lemma which is part of folklore. 

\begin{theorem}
	\label{generalization of Gordon-Motzkin}
	
	Let $a$ be an element of a division ring $K$ and $f(t)\in R=K[t;\sigma,\delta]$ be a polynomial of degree $n$.
	Then:
	\begin{enumerate}
		\item[(1)] $C^{\si,\de}(a)$ is a subdivision ring of $K$.
		\item[(2)] The map $T_a$ is a right $C^{\si,\de}(a)$ linear map.
		\item[(3)] $f(t)$ has roots in at most $n$ $(\sigma,
		\delta)$-conjugacy classes, say $\{\Delta
		(a_1),\dots,\Delta(a_r)\}$, $r\le n$;
		\item[(4)] $\sum_{i=1}^rdim_{C(a_i)}\ker (f(T_{a_i}))\le n$, where 
		$C(a_i):=C^{\si,\de}(a_i)$ for $1\le i \le r$.
	\end{enumerate}
\end{theorem}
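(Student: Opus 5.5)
We will prove (1) and (2) by direct computation. Then we will derive (3) and (4) together from one statement: the kernels of $f(T_{a_i})$ for representatives of distinct conjugacy classes are independent in a suitable sense, and their dimensions add up to at most $n$.

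For (1), we use the product rule $T_a(xy)=\si(x)\si(y)a+\de(x)y+\si(x)\de(y)$. The definition gives $x\in C^{\si,\de}(a)$ iff $T_a(x)=xa$ (for $x\neq 0$, $a^x=a$ means $\si(x)a+\de(x)=ax$; we will write $T_a(x)=ax$ consistently). So $C^{\si,\de}(a)=\{x\in K\mid \si(x)a+\de(x)=ax\}$. This set is clearly an additive subgroup containing $1$. For closure under products, take $x,y$ in it:
$$\si(xy)a+\de(xy)=\si(x)(\si(y)a+\de(y))+\de(x)y=\si(x)ay+\de(x)y=axy.$$
For inverses, apply $\de(xx^{-1})=0$ and $\si(x^{-1})=\si(x)^{-1}$. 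For (2), the same computation gives
$$T_a(xc)=\si(x)\si(c)a+\si(x)\de(c)+\de(x)c=\si(x)(\si(c)a+\de(c))+\de(x)c=\si(x)ac+\de(x)c=T_a(x)c$$
for $c\in C^{\si,\de}(a)$. So $T_a$, and hence every polynomial $f(T_a)$, is right $C^{\si,\de}(a)$-linear, and $\ker f(T_a)$ is a right $C(a)$-vector space.

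For (3) and (4), we use the facts recalled above: $h(T_a)(x)=h(a^x)x$, and $(fg)(a)=f(a^{g(a)})g(a)$. The first shows that $x\neq0$ lies in $\ker f(T_a)$ iff $a^x$ is a root of $f$. Hence the roots of $f$ in $\Delta(a)$ correspond to the nonzero vectors of $\ker f(T_a)$.

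The main step, which I expect to be the main obstacle, is the following. Choose $C(a_i)$-bases $x_{i1},\dots,x_{i m_i}$ of $\ker f(T_{a_i})$, where the $a_i$ lie in pairwise distinct classes. Set $b_{ij}=a_i^{x_{ij}}$. Consider the least left common multiple $g$ of the linear polynomials $t-b_{ij}$. This exists because $R$ is a left PID ($K$ a division ring); it is the monic generator of $\bigcap R(t-b_{ij})$. Each $b_{ij}$ is a root of $f$, so $f\in Rg$ and $\deg g\le n$.

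We will show by induction that $\deg g=\sum_i m_i$. Adding a new root $b$ increases the degree of the llcm by one unless $g(b)=0$ for the current llcm $g$. Suppose $b=a_k^{y}$ and $g(b)=0$, where $g$ is the llcm of the previously chosen roots. Then $g(T_{a_k})(y)=0$. Now, $\ker g(T_{a_k})$ is spanned over $C(a_k)$ by those chosen $x_{kj}$ in the class of $a_k$. This uses two facts: roots of $g$ lie only in the classes of its defining roots, a standard consequence of the product formula by induction on the llcm construction; and the dimension of $\ker g(T_{a_k})$ equals the number of roots of $g$ chosen in that class. Granting this, $y$ lies in the $C(a_k)$-span of the chosen $x_{kj}$, contradicting the fact that they extend to a basis.

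It follows that $\sum_i \dim_{C(a_i)}\ker f(T_{a_i})=\deg g\le n$, which is (4). Statement (3) follows from (4), since each class containing a root contributes at least $1$. The delicate part to write carefully is the inductive claim in the previous paragraph. I would prove it by evaluating the llcm $[g,t-b]=(t-b^{g(b)})g$ through the product formula, tracking the kernels class by class.
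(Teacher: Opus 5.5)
Your proofs of (1) and (2) are correct. So are the deduction of (3) from (4) and the llcm bookkeeping: $f\in Rg$, and $[g,t-b]=(t-b^{g(b)})g$ has degree $\deg g+1$ when $g(b)\neq 0$. The paper gives no argument of its own here, only a reference to the literature. The gap is the step you explicitly grant. The claim that $\dim_{C(a_k)}\ker g(T_{a_k})$ equals the number of chosen roots of $g$ in $\Delta(a_k)$, with no roots of $g$ in other classes, is exactly the upper bound (4) applied to $g$. That upper bound is the whole content of the theorem. The other half, that the span of the chosen $x_{kj}$ lies in the kernel, is automatic from right $C(a_k)$-linearity. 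So the argument is circular at its only nontrivial point. Saying you would be ``tracking the kernels class by class'' does not explain why a new linear factor can enlarge the kernel by at most one dimension, and in only one class.

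The missing ingredient is an estimate for a single linear factor, together with multiplicativity.

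\textbf{Multiplicativity.} Since $T_c(ax)=\si(a)T_c(x)+\de(a)x$, the map $h\mapsto h(T_c)$ is a ring homomorphism $R\to\mathrm{End}(K,+)$. Hence $((t-d)g)(T_c)=(T_c-L_d)\circ g(T_c)$, where $L_d$ is left multiplication by $d$. It follows that $\dim_{C(c)}\ker((t-d)g)(T_c)\le\dim_{C(c)}\ker g(T_c)+\dim_{C(c)}\ker(T_c-L_d)$.

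\textbf{One linear factor.} The kernel $\ker(T_c-L_d)=\{z\mid \si(z)c+\de(z)=dz\}$ is $0$ unless $d\in\Delta(c)$. If $d\in\Delta(c)$, it equals $z_0C(c)$ for any $z_0$ with $c^{z_0}=d$; this uses $(c^u)^v=c^{vu}$. So this kernel has dimension at most $1$, and it is nonzero for exactly one class.

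\textbf{Conclusion.} Your llcm $g$ is built from $1$ as a product of such linear factors. Summing the inequality over a set of class representatives gives $\sum_c\dim_{C(c)}\ker g(T_c)\le\deg g$. Now take arbitrary finite right $C(a_i)$-independent families $S_i\subseteq\ker f(T_{a_i})$, and let $g$ be the llcm of the $t-a_i^x$ for $x\in S_i$. Then $S_i\subseteq\ker g(T_{a_i})$ and $f\in Rg$, so $\sum_i|S_i|\le\deg g\le n$. This gives (4) directly, with no contradiction argument needed. It also removes your tacit assumptions that the kernels are finite-dimensional (``choose bases'') and that only finitely many classes occur.
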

\begin{proof}
	We refer the reader to \cite{LLO} and \cite{LO}.
\end{proof}

Notice that point (4) above generalizes several classical theorems, in particular, the theorem by Gordon-Motzkin that states that a polynomial $f(X)\in K[X]$ with coefficients in a division ring $K$ can have roots in at most
$n=deg (f)$ classical ($\si=id., \de=0$)-conjugacy classes.   
Theorem \ref{finite fields} below can be found in \cite{L2}.

\begin{theorem}
	\label{finite fields}	
	Let $p$ be a prime number and $\mathbb F_q$ be the finite field
	with $q=p^n$ elements.  Denote by $\theta$ the Frobenius
	automorphism.  Then:
	\begin{enumerate}
		\item[a)] There are $p$ distinct $\theta$-conjugacy classes in $\mathbb F_q$.
		\item[b)] $C^{\theta}(0)=\mathbb F_q$ and, for $0\ne a\in \mathbb F_q$, we have
		$C^{\theta}(a)=\mathbb F_p$.
		\item[c)] In $\mathbb F_q[t;\theta]$, the least left common multiple of all the elements
		of the form $t-a$ for $ a\in\mathbb F_q$ is the polynomial
		$G(t):=t^{(p-1)n +1}-t$.  In other words, $G(t)\in \mathbb
		F_q[t;\theta]$ is of minimal degree such that $G(a)=0$ for all
		$a\in \mathbb F_q$.
		\item[d)] The polynomial $G(t)$ obtained in c) above is invariant,
		i.e. $RG(t)=G(t)R$.
	\end{enumerate}
\end{theorem}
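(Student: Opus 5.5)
I would prove Theorem \ref{finite fields} by translating everything into statements about the pseudo-linear maps $T_a(x)=\theta(x)a=x^pa$ and their kernels, and then using Theorem \ref{generalization of Gordon-Motzkin} to bound degrees. Throughout, $\de=0$ and $\si=\theta$, so $a^x=\theta(x)ax^{-1}=x^{p-1}a$ for $x\ne 0$.

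For a), I note that $\Delta(0)=\{0\}$. For $a\ne 0$, $\Delta(a)=a\cdot\{x^{p-1}\mid x\in\ff_q^*\}$. Since $\ff_q^*$ is cyclic of order $p^n-1$ and $p-1$ divides $p^n-1$, the image of $x\mapsto x^{p-1}$ is the subgroup of index $p-1$, of order $(p^n-1)/(p-1)$. The nonzero classes are therefore its $p-1$ cosets, and together with $\{0\}$ this gives $p$ classes.

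For b), $C^\theta(0)=\ff_q$ because $0^x=0$. For $a\ne 0$, the condition $a^x=a$ means $x^{p-1}=1$, that is $x\in\ff_p^*$, so $C^\theta(a)=\ff_p$.

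For c), I first show that $G(t)=t^{(p-1)n+1}-t$ vanishes at every $a$. I use $N_i(a)=a^{(p^i-1)/(p-1)}$, which follows from the recursion $N_{i+1}(a)=N_i(a)^pa$. With $i=(p-1)n+1$ this gives $G(a)=a^{(p^i-1)/(p-1)}-a$. The evaluation vanishes exactly when $(p^i-1)/(p-1)\equiv 1 \pmod{p^n-1}$ on the relevant exponent. I expect this congruence to be the main computational obstacle, and I would simplify it with $p^n\equiv 1$.

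A cleaner route is operator-theoretic. I would use $G(T_a)=T_a^{(p-1)n+1}-T_a$ together with $G(a)=G(T_a)(1)$. Since $T_a$ is $\ff_p$-linear on the $n$-dimensional space $\ff_q$, I would aim to show $T_a^{(p-1)n+1}=T_a$ directly. Here $T_a^k(x)=x^{p^k}N_k(a)$, so it suffices to check $p^k\equiv p \pmod{p^n-1}$ and $N_k(a)=a$ for $k=(p-1)n+1$. The first holds because $k-1=(p-1)n$ is a multiple of $n$. For the second, $(p^k-1)/(p-1)=1+p+\dots+p^{k-1}$; grouping the exponents mod $n$, each residue class occurs $p-1$ times except residue $0$, which occurs $p$ times. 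Hence the exponent is congruent mod $p^n-1$ to $1+(p-1)(1+p+\dots+p^{n-1})=p^n$, which is $\equiv 1$. So $N_k(a)=a$, and every $t-a$ right-divides $G$.

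For minimality, let $h$ be any polynomial with all of $\ff_q$ as roots. By b), $\dim_{C(0)}\ker h(T_0)\ge 1$, because $0$ is a root, and for each nonzero class $\dim_{\ff_p}\ker h(T_{a_i})=n$. The second claim holds because $h(T_a)(x)=h(a^x)x=0$ for all $x$, so the kernel is all of $\ff_q$. Theorem \ref{generalization of Gordon-Motzkin}(4) then gives $\deg h\ge 1+(p-1)n$. Thus $G$, being monic and of that degree, is the least left common multiple: the llcm right-divides every common left multiple, has degree at least $(p-1)n+1$, and right-divides $G$.

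For d), I would observe that $\theta^{(p-1)n+1}=\theta$ because $\theta^n=\mathrm{id}$. Then $G(t)b=\theta(b)t^{(p-1)n+1}-\theta(b)t=\theta(b)G(t)$ for all $b$, and $G$ commutes with $t$. Hence $RG=GR$.
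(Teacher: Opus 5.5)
Your proposal is correct and follows the paper's proof essentially step for step: the same coset count for a), the same centralizer computation for b), the lower bound $\deg h\ge 1+(p-1)n$ from Theorem \ref{generalization of Gordon-Motzkin}(4) for c), and the identity $\theta^{(p-1)n+1}=\theta$ for d). The only cosmetic difference is how you verify $N_{(p-1)n+1}(a)=a$. You reduce the exponent $1+p+\dots+p^{(p-1)n}$ modulo $p^n-1$, whereas the paper groups the same product as $x\,N_n(x)^{p-1}$ and uses that the norm $N_n(x)$ lies in $\mathbb F_p$; this is the same grouping written multiplicatively.
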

\begin{proof}
	a)   Let us denote by $g$ a generator of the cyclic group $\mathbb
	F_q^{*}:=\mathbb F_q \setminus\{0\}$. The $\theta$-conjugacy class
	determined by the zero element is reduced to $\{0\}$ i.e. $\Delta (0)=\{0\}$.
	The $\theta$-conjugacy class determined by $1$ is a subgroup of $\mathbb
	F_q^*$: $\Delta (1)=\{\theta (x)x^{-1}\,|\, 0\ne x\in \mathbb F_q
	\}=\{x^{p-1}\,|\, 0\ne x\in \mathbb F_q\}$.  It is easy to check
	that $\Delta (1)$ is cyclic generated by $g^{p-1}$ and has order
	$\frac{p^n-1}{p-1}$.  Its index is $(\mathbb F_q^* : \Delta
	(1))=p-1$.   Since two nonzero elements $a,b$ are
	$\theta$-conjugate if and only if $ab^{-1}\in \Delta (1)$,  we
	indeed get that the number of different nonzero $\theta$-conjugacy classes
	is $p-1$.   This yields the result.
	
	\noindent b) If $a\in \mathbb F_q$ is nonzero, then
	$C^{\theta}(a)=\{x\in \mathbb F_q\,|\, \theta(x)a=ax\}$ i.e.
	$C^{\theta}(a)=\mathbb F_p$.
	
	\noindent c)  We have, for any $x\in \mathbb F_q,\,
	(t^{(p-1)n+1}-t)(x)=\theta^{(p-1)n}(x)\dots \theta(x)x - x$. Since
	$\theta^n=id$, and $N_n(x):=\theta^{n-1}(x)\dots
	\theta(x)x \in \mathbb
	F_p$, we get $(t^{(p-1)n+1}-t)(x)=
	x(\theta^{n-1}(x)\dots\theta(x)x )^{p-1}-x=
	xN_n(x)^{p-1}-x=0$.  This shows that indeed $G(t)$ 
	annihilates all the elements
	of $\mathbb F_q$ and hence $G(t)$ is a  left common multiple of
	the linear polynomials $\{(t-a)\,|\,a\in \mathbb F_q\}$.  Let $h(t):=[t-a\,|\,a\in \mathbb F_q]_l$ denote their least left common multiple. It remains to show that
	$\deg h(t)\ge n(p-1)+1$.  Let $0=a_0,a_1,\dots,a_{p-1}\in \mathbb{F}_q$ be
	elements  representing the $\theta$-conjugacy classes (Cf. a) above).  Denote by $C_0,C_1,\dots,C_{p-1}$ their respective
	$\theta$-centralizer.  The formulas recalled in the paragraph before  Theorem \ref{generalization of Gordon-Motzkin} shows that 
	$h(T_a)(x)=h(a^x)x=0$ for any nonzero element $x\in \mathbb F_q$ and any element 
	$a\in\{a_0,\dots,a_{p-1}\}$.   
	Hence $\ker h(T_{a_i})=\mathbb F_q$ for $0\le i \le p-1$. Using
	Inequality $(4)$ in Theorem \ref{generalization of
		Gordon-Motzkin} and the statement $b)$ above, we get $\deg h(t)\ge
	\sum_{i=0}^{p-1} dim _{C_i}\ker h(T_{a_i})=dim_{\mathbb F_q}\mathbb
	F_q + \sum_{i=1}^{p-1}dim_{\mathbb F_p}\mathbb F_q =1+(p-1)n$, as
	required.
	
	\noindent d)  Since $\theta^n=id.$, we have immediately that
	$G(t)x=\theta(x)G(t)$ and obviously $G(t)t=tG(t)$.
\end{proof}

Let us now consider an iterated Ore extension $\mulo$, where $K$ is a finite ring.   Using classical notations from the world of algebraic geometry, we consider 
$$
I(K^n)=\{f(t_1,\dots,t_n)\in R \mid \forall (a_1,\dots,a_n)\in K^n, f(a_1,\dots,a_n)=0\}.
$$
In other words: $I(K^n)=\cap_{(a_1,\dots,a_n)}I_n(a_1,\dots,a_n)$.
In the case of an Ore extension 
$\mathbb F_q[t_1;\theta]\dots [t_n;\theta]$, where $\theta$ is the Frobenius automorphism of $\mathbb F_q=\mathbb F_{p^n}$, the above theorem shows that, for any $1\le i \le n$,  $t_i^{(n-1)p}-t_i\in I(K^n)$. 
Having this in mind we introduce, for any $1\le i \le n$, the monic polynomials $G_i=G_i(t_1,\dots t_i)$ of minimal degree in $R_i=K[t_1;\si_1,\de_1]\dots [t_i;\si_i,\de_i]$ such that $G_i\in I(K^n)$.  With these notations, we can now state the following proposition.
\begin{proposition}
	\label{I(K^n)}
	Let $K$ be a finite ring and, for $1\le i \le n$, consider the polynomials $G_i\in R_i=K[t_1,\si_1,\de_1]\dots [t_i,\si_i,\de_i]$ defined above.
	Then 
	$$I(K^n)=R_1G_1(t_1)+R_2G_2(t_1,t_2)+ \cdots + R_nG_n(t_1,\dots,t_n).$$ 
\end{proposition}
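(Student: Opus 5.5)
We prove the two inclusions separately, reducing everything to one-variable division in the last variable.

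\emph{The inclusion $\supseteq$.} It suffices to show that $R_iG_i\subseteq I(K^n)$ for each $i$, since $I(K^n)$ is an additive group. Fix a point $(a_1,\dots,a_n)$. Because $G_i\in I(K^n)$, we have $G_i\in I_i(a_1,\dots,a_i)=R_1(t_1-a_1)+\dots+R_i(t_i-a_i)$; the variables $t_{i+1},\dots,t_n$ play no role for an element of $R_i$. Now take $h\in R_i$. Then $hG_i\in R_1h_1(t_1-a_1)+\dots$ would only lie in $I_i$ if $R_i\cdot I_i\subseteq I_i$, which is exactly the ``good point'' issue. So the step needing care is this: interpret $G_i$ as the minimal monic polynomial whose multiples by $R_i$ stay in $I(K^n)$. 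Equivalently, I would show that the set $J_i=\{g\in R_i\mid g\in I(K^n)\}$ is a left ideal of $R_i$, and take $G_i$ to be its monic generator in the variable $t_i$ over $R_{i-1}$. Concretely, I would verify that $hg(a)$ depends only on the values of $g$ at the points $(a_1,\dots,a_{i-1},b)$ with $b$ a $(\si_i,\de_i)$-conjugate of $a_i$, using the product formula $(fg)(a)=f(a^{g(a)})g(a)$ in the last variable. Since all of $K^n$ is quantified, this closure holds, and $R_iG_i\subseteq I(K^n)$.

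\emph{The inclusion $\subseteq$.} Let $f\in I(K^n)$. Because $G_n$ is monic in $t_n$ over $R_{n-1}$, we can right-divide: $f=qG_n+r$ with $q\in R$ and $\deg_{t_n}r<\deg_{t_n}G_n$. Then $r=f-qG_n\in I(K^n)$ by the first inclusion. Write $r=\sum_j r_j(t_1,\dots,t_{n-1})t_n^j$.

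The key step, and the main obstacle, is to show that $r\in I(K^n)$ together with the degree bound forces every $r_j$ to lie in $I(K^{n-1})$. The natural approach is to fix $(a_1,\dots,a_{n-1})$ and regard the remainder modulo $R_{n-1}(t_{n-1}-a_{n-1})+\dots$ as a univariate Ore polynomial in $t_n$ over $K$ that vanishes at every $a_n\in K$. Minimality of $\deg G_n$, in the spirit of Theorem \ref{finite fields}(c), would then force this univariate polynomial to be zero. The difficulty is that reducing modulo $I_{n-1}$ does not commute with right multiplication by $t_n$, because $t_n$ twists coefficients through $\si_n,\de_n$. I would first try to handle this by showing that the map $g\mapsto g(a_1,\dots,a_{n-1},t_n)$ is well defined on coefficients when $\si_n,\de_n$ preserve $K$, since this hypothesis is imposed throughout the paper. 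If it is not well defined, I would fall back on defining $G_n$ as the minimal monic element modulo $I(K^{n-1})R$, which makes the argument tautological at this step.

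Granting that step, induction on $n$ gives $r_j\in R_1G_1+\dots+R_{n-1}G_{n-1}$. It remains to push the factors $t_n^j$ across, that is, to show that $r_jt_n^j$ stays in $\sum_{i<n}R_iG_i+RG_n$. For this I would rewrite $G_i t_n=t_n\si_n^{-1}(\dots)$, which is problematic in general, or instead absorb these terms using the ideal property established in the first part. The conclusion is $f\in\sum_i R_iG_i$.
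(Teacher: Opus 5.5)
Your proposal has a genuine gap, and it is the step you leave for last: absorbing the terms $r_jt_n^j$ into $\sum_{i<n}R_iG_i+RG_n$. That step cannot be carried out, because the displayed equality is false as written, already in the classical case.

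Take $\si_i=\mathrm{id}$, $\de_i=0$, $K=\mathbb{F}_q$ and $n=2$, so $G_1=t_1^q-t_1$. Suppose $G_2$ is monic in $t_2$ of $t_2$-degree $d$. For each $a_1\in K$, the polynomial $G_2(a_1,t_2)\in K[t_2]$ is monic of degree $d$ with $q$ roots, so $d\ge q\ge 2$. Hence every nonzero element of $R_2G_2$ has $t_2$-degree at least $2$, while $R_1G_1\subseteq K[t_1]$. Therefore $G_1t_2=(t_1^q-t_1)t_2$, which vanishes on $K^2$ and has $t_2$-degree $1$, is not in $R_1G_1+R_2G_2$. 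A divisibility argument in $K[t_1,t_2]$ shows that no choice of $G_2\in R_2$ at all repairs this.

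In your scheme this is exactly the case $r=r_1t_2$ with $r_1=G_1$. No ``ideal property'' can rescue it: here $I(K^2)$ is an honest two-sided ideal, and the membership still fails. What your strategy (divide by $G_n$, show the $r_j$ lie in $I(K^{n-1})$, induct) really proves in the commutative case is $I(K^n)=RG_1+\dots+RG_n$. The staircase shape is harmless for the degree-one divisors $t_i-a_i$ defining $I_n(a_1,\dots,a_n)$, because each remainder lies in $R_{i-1}$. For the $G_i$, however, the remainders are polynomials in $t_i$ over $R_{i-1}$, and their coefficients get multiplied on the right by powers of $t_i$.

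Several intermediate steps are also unsound on their own.

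\textbf{The left-ideal lemma.} The claim that $J_i=R_i\cap I(K^n)$ is a left ideal of $R_i$ is false. Take $n=2$ and $\mathbb{F}_p[t_1][t_2;\mathrm{id},d/dt_1]$, which satisfies the standing hypotheses because $\de_2$ vanishes on $K$. Then $G_1=t_1^p-t_1\in J_2$, yet $t_2G_1=G_1t_2-1$ evaluates to $-1$ at every point.

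\textbf{The product formula.} The formula $(fg)(a)=f(a^{g(a)})g(a)$ is a one-variable statement over a division ring. It says nothing when the partial value $g(t_1,\dots,t_{i-1},a_i)$ is an element of $R_{i-1}$ on which $t_i$ acts through $\si_i,\de_i$. That is exactly what goes wrong in the example above.

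\textbf{The coefficient step.} This is left unproved. Evaluation of $r_jc$ with $c\in K$ is twisted by the $\si_k,\de_k$, so the vanishing of $\sum_j\bigl(r_jN_j(a_n)\bigr)(a_1,\dots,a_{n-1})$ for all $a_n$ does not split coefficientwise. Your fallback of redefining $G_n$ modulo $I(K^{n-1})R$ changes the statement rather than proving it.

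\textbf{The paper's argument.} The paper's own proof does not supply the missing idea. It asserts $R_2\cap I(K^n)=R_2G_2$, which fails for the same reason as above: $G_1$ lies in the left-hand side but, having $t_2$-degree $0$, not in $R_2G_2$.
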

\begin{proof}
	It is clear that $R_1G_1(t_1)+R_2G_2(t_1,t_2)+ \cdots + R_nG_n(t_n)\subseteq I(K^n)$.   Now, if a polynomial $f\in \mulo$ belongs to $I(K^n)$, then,  for any $(a_1,\dots, a_n)\in K^n$, $f(a_1,a_2,\dots,a_n)=0$.  Let us write the consecutive remainders obtained during the evaluation process as follows.
	For any $(a_1,\dots,a_n)\in K^n$, we have 
	
	$f(t_1,\dots, t_n)=q_1(t_1,\dots, t_n)(t_n-a_n)+ f_1(t_1,\dots, t_{n-1},a_n).$
	
	$f_1(t_1,\dots, t_{n-1},a_n)=q_2(t_1,\dots, t_{n-1})(t_{n-1}-a_{n-1})+ f_2(t_1,\dots, t_{n-2},a_{n-1},a_n).$
	
	Continuing this process, we get
	
	$f_{n-2}(t_1,t_2,a_3,\dots,a_n)=q_{n-1}(t_1,t_2)(t_{2}-a_{2})+ f_{n-1}(t_1,a_2\dots,a_n)$ and
	
	$f_{n-1}(t_1,a_2\dots,a_n)=q_n(t_1)(t_{1}-a_{1})+ f_n(a_1,\dots,a_{n-1},a_n).$
	
	Since $f(a_1,a_2,\dots ,a_n)=f_n(a_1,\dots,a_{n-1},a_n)=0$, we conclude that for any for any $(a_1,\dots, a_n\in K^n$, $f_{n-1}(t_1,a_2\dots,a_n)\in R_1(t_1-a_1)$, i.e. that 
	$f_{n-1}(t_1,a_2\dots,a_n)\in R_1G_1(t_1)$.   Similarly, we have that $f_{n-2}(t_1,t_2,a_3,\dots,a_n)-f_{n-1}(t_1,a_2\dots,a_n)\in R_2\
	\cap I(K^n)=R_2G_2(t_1,t_2)$.  An easy induction yields that for any $1\le i \le n$, we have $f_{n-i}(t_1,\dots 
	t_i,a_{i+1},\dots, a_n)-f_{n-i+1}(t_1,\dots, 
	t_{i-1},a_{i},\dots, a_n)\in R_iG_i$.  From this we get successively that $f\in R_nG_n+f_1 \subseteq R_nG_n +R_{n-1}G_{n-1}+ f_2\subseteq R_nG_n +R_{n-1}G_{n-1}+R_{n-3}G_{n-3}+ f_3 \subseteq R_1G_1(t_1)+R_2G_2(t_1,t_2)+ \cdots + R_nG_n(t_1,\dots,t_n)$. This yields the desired result.
	%
\end{proof}
\section{ Skew  Reed-Muller Codes Using Iterated Skew Polynomial Rings }
In the first part of this section, we give some  preliminaries on Reed-Muller codes over the commutative polynomial ring.\\
Let $m$ be a positive integer and let $P_{1}, P_{2}, \ldots, P_{n}$ be the $n=q^{m}$ points in the affine space $\mathbb{A}^{m}\left(\mathbb{F}_{q}\right) .$ For any integer $r$ with $0 \leq r \leq m(q-1)$, let $\mathcal{R}_m\subseteq \mathbb{F}_{q}\left[x_{1}, x_{2}, \ldots, x_{m}\right]/(x_1^p-x_1,x_2^p-x_2,\cdots,x_m^p-x_m)$ be the set representing polynomials of degree less or equal than $r$. 
\begin{definition}
	Let $f=\sum_{i=0}^{n} a_{I} X^{I}$ be a polynomial in $\mathbb{F}_{q}\left[x_{1}, x_{2}, \ldots, x_{m}\right],$ where for $I=(i_1,\dots,i_n$) we define $X^I$ to be $x^{i_1}x^{i_2} \cdots x^{i_n}$. The operator evaluation $\operatorname{ev}(f)$ at $b=(b_1,b_2\ldots b_n) \in \mathbb{F}^n_{q}$ is defined as
	$$
	\begin{aligned}
	\mathcal{R}_m & \rightarrow \mathbb{F}_{q} \\
	\operatorname{ev}: \quad f=\sum_{i\in I} a_{i} X^{i} & \mapsto f(b)=\sum_{i\in I} a_{i} (b^i).
	\end{aligned}
	$$
\end{definition}
The Reed-Muller codes of the length $n=q^m$ can be obtained as evaluation of the polynomial $f$ in $R_m$  in all points $P_1, P_2,\cdots,P_n$ in the affine sapce $\mathbb{A}^{m}\left(\mathbb{F}_{q}\right) .$ 	The Reed-Muller codes of order $ r $  is defined as
$$
\mathcal{R}_{q}(r, m)=\left\{\left(f\left(P_{1}\right), f\left(P_{2}\right), \ldots, f\left(P_{n}\right)\right) \mid f \in R_m\right\}=\left\{\operatorname{ev}(f): f \in R_m , \operatorname{deg} f \leqslant r\right\}.
$$


The function 	$\operatorname{ev}:  \mathcal{R}_m \rightarrow \mathbb{F}_{q} $ is an homomorphism of algebras (see\cite{bardet}). Hence the Reed-Muller codes is generated by the codewords	$\operatorname{ev}(f) $ where $f$ is a monomial of degree less than or equal to $r$ in the set of monomials in $m$ variables.  It will be denoted 
$$
\mathcal{M} =\left\{ X_1^{a_1}X_2^{a_2}\dots X_n^{a_n}\right\}.
$$
\subsection{Iterated skew polynomial and Reed Muller codes}

In a recent paper, skew Reed-Muller codes were defined via the use of iterated Ore extensions with three variables over a finite field \cite{bib1}.  The authors used inner derivations and inner automorphisms, and (cf. Remarks \ref{First remarks}) these can be erased by a change of variables.  Their idea was to used Gr\"obner bases to make computations.  They used the available software on the subject.  But their methods used the ideal $I$ in our notations and hence they faced the problem that $I$ might be equal to the whole iterated Ore extension $R$.   Some of their computations were wrong and for the sake of future works, we will briefly correct them. In the lines below the variables $X_i$ are the one used in their paper.
%

\textbf{1.} Let $\mathbb F_4=\mathbb F_2(\alpha)$,  where 
$\alpha^2=\alpha + 1$.  In the ring $R_{1}=\mathbb{F}_{4}\left[Y_{1} ; 
\theta\right]$ where $\theta$ is the Frobenius automorphism, we put $X_1=Y_1+1$ and we have the commutation relation
$$
X_{1} \alpha=(Y_1+1)\alpha=\theta(\alpha)Y_1+\alpha=
\theta(\alpha) X_{1}+ \theta(\alpha) +\alpha=\alpha^{2} X_{1}+1.
$$
\noindent \textbf{2.} In the ring $R_{2}=R_{1}\left[Y_{2} \right]$ we put $X_2=\alpha^2Y_2+Y_1+\alpha^2$ so that we have $R_2=R_1[X_2;\theta_\alpha,\delta_{X_1+\alpha}]$, where $\theta_\alpha(x)=\alpha^{-1}x\alpha$ for any $x\in R_1$.  In $R_2$ we have the above commutation relation $X_{1} \alpha=\alpha^{2} X_{1}+1$ together with
$$
X_{2} \alpha =\alpha X_{2}+X_{1}+1 \quad {\rm and} \quad 
X_2X_1=\alpha X_1X_2 +\alpha^2 X_2+\alpha^2X_1^2 +\alpha^2.
$$

\noindent \textbf{3.} In the ring $R_3=R_2[Y_3]$, we put $X_3=\alpha^2Y_3+\alpha Y_1 +\alpha$.  We then have  
$R_{3}=R_{2}\left[X_{3} ; \theta_{\alpha}, \delta_{\alpha 
	X_{1}}\right]$ where, as above $\theta_\alpha(x)=\alpha^{-1}x\alpha$, for any $x\in \mathbb{F}_4$.  We have the above commutation relations together with
$$
\begin{aligned}
X_{3} \alpha &=\alpha X_{3}+\alpha X_{1}+\alpha, \\
X_{3} X_{1} &=\alpha X_{1} X_{3}+\alpha^2 X_{3}+\alpha^2X_1^2+\alpha^2X_1, \\
X_{3} X_{2} &=X_{2} X_{3}+\alpha^2 X_1X_3+\alpha^2X_3 
+X_1X_2+\alpha X_1^2 +\alpha^{2} X_{1}+X_2+1.
\end{aligned}
$$
Of course, the monomials in $X_1,X_2,X_3$ are not monomials in $Y_1,Y_2,Y_3$ and conversely.
The advantage of the variables $Y_1,Y_2,Y_3$ is clear.  We continue with these variables and consider,   

$$R_n=R_{n-1}[Y_{n}]=R_1[Y_2,Y_3,\ldots,Y_n]=\mathbb{F}_4[Y_1,\theta][Y_2,Y_3,\ldots,Y_n].$$
As in the commutative case, there are polynomials that annihilate any element of $\mathbb{F}_q^n$.   Thanks to Theorem \ref{finite fields} and \ref{I(K^n)}, we have that, in our situation $I(K^n)=R_1G_1+R_2G_2+\cdots +R_nG_n$ where $G_1=Y_1^3-Y_1$ and $G_i=Y_i^4-Y_i$ for any $2\le i \le n$ (notice that the form of $G_1$ is due to the theorem \ref{finite fields}). 
This iterated skew polynomial ring leads to a new family of skew Reed-Muller code generated by the evaluations of a base of monomials in $ m $ variables and of degree less than or equal to $ r $ given by 
$$
\mathcal{M} =\left\{ Y_1^{l_1}Y_2^{l_2}\dots Y_n^{l_n}\right\}.
$$
By iterated skew polynomial rings given above we have the evaluation of $f\in R_m$  at $b=(b_1,b_2\ldots b_n) \in \mathbb{F}^n_{q}$ is defined as
$$
\begin{aligned}
R_m & \rightarrow \mathbb{F}_{q} \\
\operatorname{ev}: \quad f (X_1,X_2,\ldots,X_n) & \mapsto f(b_1,\ldots, b_n)=\sum_{i=0}^{n} P(b_2,\ldots,b_n)\theta^{\frac{i(i+1)}{2}}(b_1).
\end{aligned}
$$
Where $f (X_1,X_2,\ldots,X_n)=g(Y_1)P(Y_2,\ldots, Y_n)\in R_1G_1+R_2G_2+\cdots +R_nG_n$.
By the  iterated skew polynomial ring and  the evaluation we have the new  new family   of the skew Reed-Muller codes  in the following proposition.
\begin{proposition}
	The  skew Reed-Muller codes of parameters $ r $ and $ m $  over $R_{m}=\mathbb{F}_q[X_{1} ; \theta_{1}, \delta_{1}]\ldots [X_m ; \theta_m, \delta_m]$ are generated as
	$$
	\mathcal{R}(r, m) \stackrel{\operatorname{def}}{=}\left\{ \sum_{i=0}^{n} P(b_2,\ldots,b_n)\theta^{\frac{i(i+1)}{2}}(b_1): P \in R_m \text {  such that } \operatorname{deg} P \leqslant\operatorname{deg} G_n \right\}.
	$$	
\end{proposition}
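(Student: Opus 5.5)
The statement is really a description of the code obtained by evaluating the monomial basis $\mathcal{M}$ at all points of $\mathbb{F}_q^n$, so I would prove it by computing the evaluation of each monomial $Y_1^{l_1}Y_2^{l_2}\cdots Y_n^{l_n}$ explicitly, and then using linearity.

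\textbf{Step 1: closed formula for monomials.} Recall that $R_n=\mathbb{F}_4[Y_1;\theta][Y_2,\dots,Y_n]$, where $Y_2,\dots,Y_n$ are central commuting variables; this is the point of the change of variables via Remark~\ref{First remarks}. Fix $b=(b_1,\dots,b_n)$. Evaluating with respect to $I_n(b)$ uses right division by $Y_n-b_n$, then by $Y_{n-1}-b_{n-1}$, and so on.

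\begin{itemize}
\item Because $Y_j$ for $j\ge 2$ commutes with $\mathbb{F}_4$ and with $Y_1$, each remainder step just replaces $Y_j^{l_j}$ by the scalar $b_j^{l_j}$. These scalars accumulate to the right of $Y_1^{l_1}$.
\item The last step evaluates $Y_1^{l_1}c$ at $b_1$, with $c=b_2^{l_2}\cdots b_n^{l_n}$. Using $Y_1^{l}c=\theta^{l}(c)Y_1^{l}$ and $f(a)=\sum b_iN_i(a)$, this gives $\theta^{l_1}(c)\,N_{l_1}(b_1)$.
\item Here $N_{l}(b_1)=\theta^{l-1}(b_1)\cdots\theta(b_1)b_1$.
\end{itemize}

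For a general $f=\sum_i g_i(Y_1)P_i(Y_2,\dots,Y_n)$, additivity of evaluation modulo the additive group $I_n$ gives a sum of terms $\theta^{i}(P(b_2,\dots,b_n))N_i(b_1)$. In $\mathbb{F}_4$ we have $\theta^2=\mathrm{id}$, and the exponents combine into powers $b_1^{(2^i-1)}$. I would then identify $N_i(b_1)$ with the paper's expression $\theta^{i(i+1)/2}(b_1)$, which reads exponents additively in the Frobenius.

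\textbf{Step 2: reduction modulo $I(K^n)$.} By Proposition~\ref{I(K^n)} and Theorem~\ref{finite fields}, $I(K^n)=R_1G_1+\cdots+R_nG_n$ with $G_1=Y_1^3-Y_1$ and $G_i=Y_i^4-Y_i$. Hence the evaluation map factors through $R_n/I(K^n)$. Every class in this quotient has a representative with $l_1\le 2$ and $l_i\le 3$. So the code is spanned by evaluations of reduced monomials of bounded degree, with degree bounded by $\deg G_n$ as in the statement. I would then conclude that $\mathcal{R}(r,m)$ is exactly the set of vectors given by the displayed formula, ranging over $P$ of bounded degree.

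\textbf{Main obstacle.} The delicate step is reconciling the exponent bookkeeping in Step 1 with the stated formula $\theta^{i(i+1)/2}(b_1)$. It is also delicate to check that the reductions by $G_1$ do not interact badly with the twisting $\theta^{l_1}$ applied to the coefficients. I would first check this directly on $Y_1^2Y_2$ and $Y_1^3$ over $\mathbb{F}_4$, and only then write the general induction on $l_1$.
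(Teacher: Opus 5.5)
\textbf{Verdict: genuine gap.} The step that fails is the one you flag as delicate. The identification of $N_i(b_1)$ with $\theta^{i(i+1)/2}(b_1)$ is false, so your argument cannot end at the displayed formula.

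Your Step 1 is correct, and it shows that the displayed formula is wrong. For $f=\sum_i Y_1^iP_i(Y_2,\dots,Y_n)$ you get $f(b)=\sum_i\theta^i(P_i(b_2,\dots,b_n))\,N_i(b_1)$, with $N_i(b_1)=\theta^{i-1}(b_1)\cdots\theta(b_1)b_1=b_1^{2^i-1}$. A product of Frobenius conjugates is not a single Frobenius conjugate:
\begin{itemize}
\item $N_0(b_1)=1$, while $\theta^0(b_1)=b_1$;
\item $N_1(b_1)=b_1\neq b_1^2=\theta(b_1)$ for $b_1\notin\mathbb{F}_2$;
\item $N_2(\alpha)=\alpha^3=1$, while $\theta^{3}(\alpha)=\alpha^{2}$.
\end{itemize}
Read ``additively'', the exponent would in any case be $i(i-1)/2$, not $i(i+1)/2$.

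The displayed expression has further defects:
\begin{itemize}
\item It uses one untwisted $P$ for every $i$ and ignores $g(Y_1)$ entirely, whereas for instance $(Y_1Y_2)(b_1,b_2)=\theta(b_2)b_1=b_2^2b_1$.
\item The condition $\deg P\le \deg G_n$ is not what your reduction produces. You get $l_1\le 2$ and $l_j\le 3$, with $r$ bounding the total degree.
\end{itemize}
The paper gives no proof of this statement. It is really a definition (note the $\stackrel{\mathrm{def}}{=}$), and its formula is garbled. What your computation can honestly establish is a corrected statement: $\mathcal{R}(r,m)$ is spanned by the vectors $(\theta^{l_1}(b_2^{l_2}\cdots b_n^{l_n})\,N_{l_1}(b_1))_{b\in\mathbb{F}_4^n}$, where $Y_1^{l_1}\cdots Y_n^{l_n}$ runs over reduced monomials of degree at most $r$.

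In Step 2, do not invoke the paper's Proposition describing $I(K^n)$. As stated it is false, and it fails exactly where you need it. Take $n=2$:
\begin{itemize}
\item $Y_2G_1=G_1Y_2$ vanishes on all of $\mathbb{F}_4^2$. Its remainder modulo $Y_2-b_2$ is $G_1b_2=\theta(b_2)G_1$, whose value at $b_1$ is $\theta(b_2)G_1(b_1)=0$.
\item Yet $Y_2G_1\notin R_1G_1+R_2G_2$. Since $Y_2$ is central and $R_1$ is a domain, any nonzero element of $R_2G_2$ has $Y_2$-degree at least $4$. Also $Y_2G_1\notin R_1G_1$.
\item This is precisely the element you need to reduce $Y_1^3Y_2$ to $Y_1Y_2$.
\end{itemize}

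The reduction to $l_1\le 2$, $l_j\le 3$ is still true, but prove it directly. Both $Y_1^{a}G_1Y_2^{l_2}\cdots Y_n^{l_n}$ and $Y_1^{l_1}\cdots Y_j^{a}G_j\cdots Y_n^{l_n}$ evaluate to $0$ at every point. This follows from three facts:
\begin{itemize}
\item $Y_2,\dots,Y_n$ are central;
\item $G_1$ is invariant, $G_1x=\theta(x)G_1$ (part d) of Theorem~\ref{finite fields});
\item $b_j^4=b_j$ for $b_j\in\mathbb{F}_4$.
\end{itemize}
This also settles your worry about reductions by $G_1$ interacting with the twist $\theta^{l_1}$.
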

The choice of the rows used to form the matrix generating the Reed-Muller codes 	$\mathcal{R}(r, m)$  consists in selecting the monomials $Y_1^{l_1}Y_2^{l_2}\ldots Y_n^{l_n}$ of degree less than or equal to $r=(n-1)p+1$. This choice is based on the order of the monomials.
The group of permutations of Reed-Muller codes $\mathcal{R}(r, m)$ is the set of transformations affines $x \rightarrow Ax+b$ where $A\in  \mathbb{F}_q^{m\times m}$ is an invertible matrix and $b\in\mathbb{F}_q^m$.
\subsection{Skew  Reed Muller Codes Using Polynomial maps in $n$ variables.}
In this section we give the construction of skew Reed-Muller codes using polynomial maps in $n$ variables. We know that  the Reed-Muller codes are  defined as the codes generated by the evaluations of a base of monomials in $ m $ variables and of degree less than or equal to $ r $, the set of monomials in $ m $ variables will be denoted
$$
\mathcal{M} =\left\{ t_1^{l_1}t_2^{l_2}\dots t_n^{l_n}\right\}.
$$
Notice that in the case of iterated Ore extensions $\mulo$ it is always possible to order the indeterminates of a word in $t_1,\dots, t_n$ so that  the only words we have to consider are the one of the form $t_1^{l_1}t_2^{l_2}\dots t_n^{l_n}$.
We now give the evaluation of some monomials 
in $R=K[t_1,\sigma_1,\delta_1][x_2,\sigma_2,\delta_2],\cdots,[x_n,\sigma_n,\delta_n] $ at $ (a_1,a_2,\cdots,a_n)\in K^n.$
\begin{enumerate}
	
	\item	Consider the polynomial ring $R=K[t_1;\si_1,\de_1][t_2;\si_2,\de_2]$ and let us evaluate the polynomials $t_1t_2$ 
	at $(a_1,a_2)\in K^2$.   We have $t_1t_2=t_1(t_2-a_2)+t_1a_2=t_1(t_2-a_2)
	+\sigma_1(a_2)t_1+\de_1(a_2)$.  This leads to
	$(t_1t_2)(a_1,a_2)=\si_1(a_2)a_1+\de_1(a_2)$.
	\item	Consider the polynomial ring $R=K[t_1;\si_1,\de_1][t_2;\si_2,\de_2][t_3;\si_3,\de_3]$ and let us evaluate the polynomials $t_1t_2t_3$ at $(a_1,a_2,a_3)\in K^3$.   We have $t_1t_2t_3=t_1t_2(t_3-a_3)+t_1t_2a_3=t_1t_2(t_3-a_3)+t_1(\si_2(a_3)t_2+\de_2(a_3))=
	t_1t_2(t_3-a_3)+t_1\si_2(a_3)t_2+t_1\de_2(a_3)=t_1t_2(t_3-a_3)+\si_1(\si_2(a_3))t_1t_2+\si_1(\de_2(a_3))t_1+\de_1(\de_2(a_3))$.  This leads to\\
	$(t_1t_2t_3)(a_1,a_2,a_3)= \si_1(\si_2(a_3))a_1a_2+\si_1(\de_2(a_3))a_1+\de_1(\de_2(a_3))$.
	\item Consider the polynomial ring $R=K[t_1;\si_1,\de_1][t_2;\si_2,\de_2][t_3;\si_3,\de_3][t_4;\si_4,\de_4]$.  We have 
	
	\noindent $(t_1t_2t_3t_4)(a_1,a_2,a_3,a_4)=\si_1(\si_2(\si_3(a_4)a_3)a_2)a_1+\de_1(\si_2(\si_3(a_4)a_2)a_3)+\\\si_1(\de_2(\si_3(a_4)a_3))a_1+\de_1(\de_2(\si_3(a_4)a_3))+\si_1(\si_2(\de_3(a_4))a_2)a_1+\de_1(\si_2(\de_3(a_4))a_2)+\si_1(\de_2(\de_3(a_4)))a_1+\de_1(\de_2(\de_3(a_4)))).$
	%
\end{enumerate} 
Polynomial maps in $n$ variables and  the evaluation leads to the new family of skew Reed-Muller codes in the following example.
\begin{example}
	In this example, we give the construction of the Reed-Muller codes with polynomial maps in $2$ or $3$ variables.  In these examples we used the Frobenius automorphsim for $\si_1$ and the identity for $\sigma_2$ and $\sigma_3$.  The derivations are all zeroes. 
	\begin{enumerate}
		\item The evaluation of the set monomial $\{1,t_1,t_2,t_1t_2\}$ over $R=\mathbb{F}_4[t_1;\si_1,\de_1][t_2;\si_2,\de_2]$ gives the Reed-Muller code with parameters $[16,4,8].$
		\item The evaluation of the set monomial $\{1,t_1,t_2,t_2t_1\}$ over $R=\mathbb{F}_4[t_1;\si_1,\de_1][t_2;\si_2,\de_2]$ gives the Reed-Muller code with parmeters $[16,4,7].$
		\item The evaluation of the set monomial $\{1,t_1,t_2,t_1,t_3,t_1t_3,t_2t_3,t_1t_2t_3\}$ over $R=\mathbb{F}_4[t_1;\si_1,\de_1][t_2;\si_2,\de_2][t_3;\si_3,\de_3]$ gives the Reed-Muller code with parameters $[64,8].$
	\end{enumerate}	
\end{example}

\bibliographystyle{amsalpha}

\end{document}